\DeclareSymbolFont{rsfscript}{OMS}{rsfs}{m}{n}
\DeclareSymbolFontAlphabet{\mathrsfs}{rsfscript}
\DeclareMathOperator{\dt}{.}
\begin{document}

\title{Reset Complexity of Ideal Languages}

\author{Marina I. Maslennikova}

\titlerunning{Reset Complexity of Ideal Languages}
\authorrunning{M. Maslennikova}

\institute{Ural Federal University, Lenina st. 51, 620083, Ekaterinburg, Russia\\
\email{maslennikova.marina@gmail.com}}

\maketitle

\begin{abstract}
We present a new characteristic of a regular ideal language called reset complexity. We find some bounds on the reset complexity in terms of the state complexity of a given language. We also compare the reset complexity and the state complexity for languages related to slowly synchronizing automata and study uniqueness question for automata yielding the minimum of reset complexity.
\end{abstract}

\section{Introduction}

Let $\mathscr{A}=\langle Q,\Sigma,\delta\rangle$ be a \textit{deterministic finite automaton} (DFA), where $Q$ is the \textit{state set}, $\Sigma$ is the \textit{input alphabet}, and $\delta : Q \times \Sigma \rightarrow Q$ is the \textit{transition function} defining an action of the letters in $\Sigma$ on $Q$. The action extends in a unique way to an action $Q \times \Sigma^{*} \rightarrow Q$ of the free monoid $\Sigma^{*}$ over $\Sigma$; the latter action is still denoted by $\delta$. When function $\delta$ is clear from the context, we will write $q\dt w$ instead of $\delta(q,w)$ for $q\in Q$. For convenience we denote the set $\{\delta(q,w) \mid q \in S\}$ by $S \dt w$ for $S\subseteq Q$ and $w\in\Sigma^{*}$. In theory of formal languages the definition of a DFA usually includes the set $F\subseteq Q$ of \textit{terminal states} and an initial state \textit{initial state} $q_0\in Q$. We will use this definition when dealing with automata as devices for recognizing languages. The language $L\subseteq\Sigma^*$ is recognized by an automaton $\mathscr{A}=\langle Q,\Sigma,\delta, F, q_0\rangle$ if $L=\{w\in\Sigma^*\mid \delta(q_0,w)\in F\}$.

A DFA $\mathscr{A}=\langle Q,\Sigma,\delta\rangle$ is called \textit{synchronizing}, if there exists a word $w  \in \Sigma^{*}$ which leaves the automaton in one particular state no matter which state in $Q$ it starts at: $\delta(q,w)=\delta(q', w)$ for all $q, q' \in Q$. Any such word is said to be \textit{synchronizing} (\textit{reset}) for the DFA $\mathscr{A}$.
Synchronizing automata are of interest, motivated mostly by the \v{C}ern\'{y} conjecture. \v{C}ern\'{y} in \cite{Cerny} produced for each $n>1$ a synchronizing automaton $\mathscr{C}_{n}$ with $n$ states over a binary alphabet whose shortest synchronizing word has length $(n-1)^{2}$. Later he conjectured that any synchronizing automaton with $n$ states possesses a synchronizing word of length at most $(n-1)^{2}$. This conjecture has been proved for various classes of synchronizing automata, nevertheless in general it remains one of the most longstanding open problems in automata theory. \v{C}ern\'{y} series provides lower bound on maximum possible length of shortest synchronizing words for synchronizing automata with $n$ states. On the other hand, the best upper bound known so far is cubic: $\frac{n^{3}-n}{6}$ \cite{Pin}. For more details on synchronizing automata see the survey \cite{Vo_Survey}.

By $Syn(\mathscr{A})$ we denote the language of all words synchronizing $\mathscr{A}$. For a given DFA $\mathscr{A}=\langle Q,\Sigma,\delta, q_{0}, F\rangle$ the state $s$ is called \textit{reachable} if there exists a word $w\in \Sigma^{*}$ with the property $\delta(q_0, w)=s$, and \textit{unreachable}, otherwise. By $L(\mathscr{A}^{s})$ we denote the set of all words recognized by a DFA $\mathscr{A}^{s}=\langle Q,\Sigma,\delta, s, F\rangle$ which is obtained from $\mathscr{A}$ by choosing $s$ as the initial state. The states $s$ and $t$ of a DFA $\mathscr{A}$ are \textit{equivalent} if $L(\mathscr{A}^{s})=L(\mathscr{A}^{t})$. A DFA with a distinguished initial state and a distinguished set of final states is called \textit{minimal} if it contains no (different) equivalent states, and all states are reachable from the initial state. For a given regular language $L$ minimal automaton recognizing $L$ is unique up to isomorphism. The number of states in the minimal DFA is denoted by $sc(L)$ and is called \textit{state complexity} of the language $L$.

In what follows we consider only ideal languages, that is, languages $L$ satisfying the property $L=\Sigma^{*}L\Sigma^{*}$. It is obvious that the language of synchronizing words of a given synchronizing automaton satisfies this property. Now we prove the following
\begin{lemma}
\label{L Syn}
 Let $L$ be an ideal language and $\mathscr{A}$ the minimal automaton recognizing $L$. Then $\mathscr{A}$ is synchronizing and $Syn(\mathscr{A})=L$.
\end{lemma}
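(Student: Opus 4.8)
The plan is to first pin down the shape of the minimal automaton $\mathscr{A}=\langle Q,\Sigma,\delta,q_0,F\rangle$ that is forced by the ideal condition, and then read off both assertions from that structure. (Throughout I assume $L\neq\emptyset$; otherwise the one-state non-final sink recognizing $\emptyset$ has $Syn(\mathscr{A})=\Sigma^*\neq L$, so nonemptiness is needed for the statement to hold.)

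First I would exploit the right ideal inclusion $L\Sigma^*\subseteq L$ together with minimality to locate a single absorbing final state. Pick any $w\in L$ and set $s=\delta(q_0,w)$. Since $wz\in L$ for every $z\in\Sigma^*$, we get $L(\mathscr{A}^{s})=\Sigma^*$; and for any letter $a$ the state $\delta(s,a)$ again satisfies $L(\mathscr{A}^{\delta(s,a)})=\Sigma^*$, so by the absence of distinct equivalent states $\delta(s,a)=s$. Thus $s$ is a final sink fixed by every word. The same computation shows $\delta(q_0,w')=s$ for every $w'\in L$ (such a state accepts $\Sigma^*$, hence coincides with $s$), and conversely $\delta(q_0,w')=s$ forces $w'\in L$ because $s\in F$. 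This yields the clean description $L=\{w\in\Sigma^*\mid\delta(q_0,w)=s\}$, which is the technical core of the argument.

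Next, to prove $L\subseteq Syn(\mathscr{A})$ (and thereby that $\mathscr{A}$ is synchronizing), I would use the left ideal inclusion $\Sigma^*L\subseteq L$ and reachability. Take $w\in L$ and an arbitrary state $q$; by minimality $q=\delta(q_0,u)$ for some $u\in\Sigma^*$, so $\delta(q,w)=\delta(q_0,uw)$. Since $uw\in\Sigma^*L\subseteq L$, the previous step gives $\delta(q_0,uw)=s$, hence $q\cdot w=s$ for every $q\in Q$. Therefore every word of $L$ is a reset word, so $\mathscr{A}$ is synchronizing and $L\subseteq Syn(\mathscr{A})$.

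Finally, for the reverse inclusion $Syn(\mathscr{A})\subseteq L$, I would invoke that $s$ is a sink. If $w$ synchronizes $\mathscr{A}$, then $Q\cdot w$ is a singleton; but $\delta(s,w)=s$, so that singleton must be $\{s\}$, and in particular $\delta(q_0,w)=s$, i.e. $w\in L$. Combining the two inclusions gives $Syn(\mathscr{A})=L$. I expect the only genuinely delicate point to be the first step, namely extracting the unique absorbing state from minimality via the right ideal property, whereas the second and third steps are short consequences of the two halves of the ideal condition.
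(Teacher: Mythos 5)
Your proof is correct and follows essentially the same route as the paper: isolate a unique absorbing final state, derive $L\subseteq Syn(\mathscr{A})$ from the left-ideal property plus reachability of all states, and derive $Syn(\mathscr{A})\subseteq L$ from the fact that the synchronizing image must be the sink. Your justification of the absorbing state via residual languages (every final state has residual $\Sigma^*$, so minimality forces a single sink) is in fact a cleaner rendering of the paper's state-merging sketch, and your remark that $L\neq\emptyset$ is required is a valid caveat the paper leaves implicit.
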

\begin{proof}

Note that, for any word $w \in L$ the word $wu \in L$ for all $u \in \Sigma^{*}$. Thus the minimal automaton $\mathscr{A}=\langle Q,\Sigma,\delta,q_0,F\rangle$ recognizing $L$ has only one terminal state $f$. Otherwise we would remove all transitions from terminal states and replace all terminal states by the unique terminal state $f$. Next we would put $\delta(f,a)=f$ for all $a\in\Sigma$. In such a way we would obtain a DFA recognizing the same language with less states than $\mathscr{A}$.

Now we prove that $L\subseteq Syn(\mathscr{A})$. Take any $w \in L$. By the definition $\delta(q_0, w) =f$. Let $q$ be an arbitrary state of the DFA $\mathscr{A}$. All states in $\mathscr{A}$ are reachable, thus there exists a word $u \in \Sigma^{*}$ such that $\delta (q_{0},u)=q$. Consider now the word $uw$. Note that $uw \in L$, hence $\delta (q_{0},uw)=f$. On the other hand, we have $\delta (q_{0},u)=q$, then $\delta (q,w)=f$. By the definition of a synchronizing word we get that, for any $w \in L$ holds $w \in Syn(\mathscr{A})$. Hence the DFA $\mathscr{A}$ is synchronizing, and $L\subseteq Syn(\mathscr{A})$.
Take now $w \in Syn(\mathscr{A})$. Clearly $w$ brings any state of $\mathscr{A}$ to $f$. In particular $\delta(q_0,w)=f$, i.e. $w \in L$. Hence $Syn(\mathscr{A}) \subseteq L$. And the equality $Syn(\mathscr{A}) = L$ holds.
\qed
\end{proof}

Lemma~\ref{L Syn} shows that for every ideal language $L$ there is a synchronizing automaton $\mathscr{A}$ such that $Syn(\mathscr{A})=L$.
 Thus, it is rather natural to find out how many states an automaton $\mathscr{A}$ may have. We define the \emph{reset complexity} $rc(L)$ of an ideal language $L$ as the minimal possible number of states in a synchronizing automaton $\mathscr{A}$ such that $Syn(\mathscr{A})=L$. By Lemma~\ref{L Syn} we have $rc(L)\leq sc(L)$. Now it is of interest how big a gap between $rc(L)$ and $sc(L)$ can be. Another interesting question concerns the uniqueness of the minimal in terms of reset complexity automaton. It is well-known that the minimal automaton recognizing a given language $L$ is unique up to isomorphism. One may think that the same fact holds for the synchronizing automaton minimal in terms of reset complexity. However, as our results show, in general this is not the case.

The notion of reset complexity might give a new approach to the \v{C}ern\'{y} conjecture. Let $\ell$ be the length of shortest words in $L$, and let $rc(L)=n$. If we had proved the inequality $n\geq \frac{\sqrt{\ell}}{c}$ (where $c$ is some constant value), we would obtain quadratic upper bound on $\ell$, namely $\ell\leq c^{2}n^{2}$. By lemma~\ref{L Syn} $L$ is the language of synchronizing words for some automaton. Then inequality $\ell\leq c^{2}n^{2}$ presents a quadratic upper bound on the length of shortest synchronizing word for a given synchronizing automaton, that is a major step towards the proof of the \v{C}ern\'{y} conjecture.

Minimal in terms of reset complexity automata are useful for compact representation of a given language. Indeed, let $L$ be an ideal language. It is accepted by the minimal DFA $\mathscr{P}$. Note that, simple operations such as checking whether a given word $w$ is in $L$ take polynomial in the length of $w$ time, namely $O(|w|)$. On the other hand, the automaton $\mathscr{P}$ has $sc(L)$ states and this number may be rather large. Let $\mathscr{A}$ be a synchronizing DFA such that $Syn(\mathscr{A})=L$ and such that $\mathscr{A}$ has $rc(L)$ states. Now checking the property $w\in L$ takes $O(|w|\cdot rc(L))$ time. It is slightly worse than $O(|w|)$. However, as our results show, sometimes $sc(L)$ is an exponential function of $rc(L)$. So in this case we obtain exponential economy in space that is needed to keep the corresponding DFA.

\section{\normalsize{Upper and lower bounds on reset complexity}}

In this section we show that the upper bound from Lemma~\ref{L Syn} on reset complexity of a given ideal language $L$ is tight. Also we find a simple lower bound on $rc(L)$ in terms of the length of the shortest word in $L$.
To this aim we introduce some auxiliary notions. Given a subset $S$ of $Q$, by ${\cal C}(S)$ we denote the set of all words \textit{stabilizing} $S$:
$${\cal C}(S)=\{w\in\Sigma^{*}|\quad S \dt w=S\}.$$ We make use of the following results from \cite{Prib}.

\begin{lemma}\label{stab}{\mdseries\cite[Lemma~1.]{Prib}}
Given a word $w\in \Sigma^{*}$ there exists an integer $\beta \geq 0$ such that the set $m(w)=Q \dt w^{\beta}$ is fixed by $w$. Moreover $m(w)$ is the largest subset of $Q$ with this property.
\end{lemma}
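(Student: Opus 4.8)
The plan is to exploit the finiteness of $Q$ together with the monotonicity of the action of $w$ with respect to set inclusion. First I would observe that applying $w$ can never enlarge a subset: since $Q\dt w\subseteq Q$, and $S\subseteq T$ implies $S\dt w\subseteq T\dt w$, iterating yields a descending chain
$$Q\supseteq Q\dt w\supseteq Q\dt w^{2}\supseteq\cdots.$$
Because $Q$ is finite, the cardinalities $|Q\dt w^{k}|$ form a non-increasing sequence of non-negative integers, hence they must stabilize. So there is a smallest $\beta\geq 0$ with $|Q\dt w^{\beta+1}|=|Q\dt w^{\beta}|$; combined with the inclusion $Q\dt w^{\beta+1}\subseteq Q\dt w^{\beta}$, equal cardinalities in a finite set force $Q\dt w^{\beta+1}=Q\dt w^{\beta}$. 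Putting $m(w)=Q\dt w^{\beta}$, this equality reads $m(w)\dt w=m(w)$, i.e. $m(w)$ is fixed by $w$, which establishes the first assertion.

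For the maximality claim I would take an arbitrary subset $S\subseteq Q$ fixed by $w$, meaning $S\dt w=S$, and show $S\subseteq m(w)$. From $S\dt w=S$ one gets by a trivial induction $S=S\dt w^{k}$ for every $k\geq 0$. On the other hand, $S\subseteq Q$ and monotonicity of the action give $S\dt w^{k}\subseteq Q\dt w^{k}$ for all $k$. Taking $k=\beta$ then yields $S=S\dt w^{\beta}\subseteq Q\dt w^{\beta}=m(w)$. Thus every $w$-fixed subset is contained in $m(w)$, so $m(w)$ is indeed the largest subset of $Q$ fixed by $w$.

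I do not expect a genuine obstacle here, as the argument is essentially a stabilization (pigeonhole) argument on the finite poset of subsets of $Q$. The only points demanding care are verifying that the action is monotone for inclusion (so that the chain is genuinely descending, not merely of non-increasing size) and being precise that \emph{fixed by} $w$ denotes the set equality $S\dt w=S$ rather than a pointwise fixing of the states of $S$. The maximality half is the slightly less automatic part, and its key device is the identity $S=S\dt w^{\beta}$, valid for any $w$-fixed $S$, which lets one compare $S$ with $m(w)$ at the single exponent $\beta$.
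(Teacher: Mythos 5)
Your proof is correct and complete. Note that the paper itself gives no proof of this lemma at all: it is quoted verbatim from the cited reference \cite{Prib} (Pribavkina--Rodaro, Lemma~1), so there is no internal argument to compare against. Your stabilization argument --- monotonicity of the action on subsets giving the descending chain $Q \supseteq Q\dt w \supseteq Q\dt w^{2} \supseteq \cdots$, finiteness forcing $Q\dt w^{\beta+1} = Q\dt w^{\beta}$, and then the comparison $S = S\dt w^{\beta} \subseteq Q\dt w^{\beta} = m(w)$ for any $w$-fixed subset $S$ --- is precisely the standard proof of this fact, and it establishes the stronger statement that $m(w)$ contains every $w$-fixed subset (maximum under inclusion), not merely that it has the largest cardinality.
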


Let $k(w)$ be the least integer with the property $Q \dt w^{k(w)}=m(w)$.

\begin{lemma}\label{card}{\mdseries\cite[Lemma~2.]{Prib}}
Given a word $w \in \Sigma^{*}$
$$ k(w)\leq |Q|-|m(w)|.$$
\end{lemma}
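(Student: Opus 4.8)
The plan is to exploit the descending chain of images produced by iterating $w$ and to show that its cardinality must drop strictly at every step until it settles at $m(w)$. First I would note that $Q\dt w\subseteq Q$, and that applying the action of $w$ preserves inclusions, so the chain
$$Q=Q\dt w^0\supseteq Q\dt w^1\supseteq Q\dt w^2\supseteq\cdots$$
is non-increasing; consequently the integers $|Q\dt w^i|$ form a non-increasing sequence bounded below by $0$.

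The key step is to argue that this sequence is strictly decreasing until it reaches $m(w)$. Suppose $Q\dt w^{i}=Q\dt w^{i+1}$ for some $i$. Then $(Q\dt w^i)\dt w=Q\dt w^i$, so $Q\dt w^i$ is fixed by $w$, and by the maximality clause of Lemma~\ref{stab} we obtain $Q\dt w^i\subseteq m(w)$. Conversely $m(w)$ is itself fixed by $w$, whence $m(w)=m(w)\dt w^i\subseteq Q\dt w^i$; the two inclusions force $Q\dt w^i=m(w)$. Reading this contrapositively: as long as $Q\dt w^i\neq m(w)$ we have the strict inclusion $Q\dt w^{i+1}\subsetneq Q\dt w^i$, hence $|Q\dt w^{i+1}|\leq|Q\dt w^i|-1$.

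It then remains to count. For each $i$ with $0\leq i<k(w)$ we have $Q\dt w^i\neq m(w)$ by minimality of $k(w)$, so the cardinality drops by at least one at each of these $k(w)$ steps; telescoping gives
$$|m(w)|=|Q\dt w^{k(w)}|\leq|Q\dt w^0|-k(w)=|Q|-k(w),$$
which rearranges to the claimed bound $k(w)\leq|Q|-|m(w)|$. I expect the only delicate point to be the double inclusion pinning the stabilized set down to $m(w)$ rather than to some other $w$-fixed subset; this is precisely where the maximality assertion of Lemma~\ref{stab} is needed, and once it is secured the inequality follows from the elementary counting above.
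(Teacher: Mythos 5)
Your proof is correct. Note that the paper itself does not prove Lemma~\ref{card} at all---it is imported verbatim from \cite{Prib}---and your descending-chain argument is exactly the standard one behind that citation: the images $Q \dt w^{i}$ must shrink strictly at every step until they stabilize, and the double inclusion identifying the stabilized set with $m(w)$ (using the maximality clause of Lemma~\ref{stab} in one direction and the $w$-invariance of $m(w)$ in the other) is handled correctly, after which the counting is immediate.
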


\begin{proposition}
\label{sc rc equal}
Let $L$ be an ideal language over a unary alphabet $\Sigma$. Then $sc(L)=rc(L)=\ell+1$, where $\ell$ is the minimum length of words in $L$.
\end{proposition}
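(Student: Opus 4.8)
The plan is to first determine the exact shape of an ideal unary language, and then handle the two complexities separately: the inequality $rc(L)\leq sc(L)$ comes for free from Lemma~\ref{L Syn}, and the matching lower bound will come from Lemma~\ref{card}. Since $\Sigma=\{a\}$ is unary, every word is a power $a^{m}$, and the ideal condition $L=\Sigma^{*}L\Sigma^{*}$ forces $L$ to be upward closed: if $a^{\ell}\in L$ then $a^{i+\ell+j}=a^{i}a^{\ell}a^{j}\in L$ for all $i,j\geq 0$. Hence $L=\{a^{m}\mid m\geq\ell\}$, where $\ell$ is the minimum length of words in $L$. This normal form is what everything else will be read off from.

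For the state complexity I would exhibit the obvious ``chain'' automaton with states $q_{0},q_{1},\dots,q_{\ell-1},f$, transitions $\delta(q_{i},a)=q_{i+1}$ for $i<\ell-1$, $\delta(q_{\ell-1},a)=f$, and self-loop $\delta(f,a)=f$, with initial state $q_{0}$ and sole terminal state $f$. This DFA recognizes $L$, has $\ell+1$ states, and all of them are reachable. A short computation shows $L(\mathscr{A}^{q_{i}})=\{a^{m}\mid m\geq\ell-i\}$ while $L(\mathscr{A}^{f})=\Sigma^{*}$, so these languages are pairwise distinct and no two states are equivalent; hence the automaton is minimal and $sc(L)=\ell+1$. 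Lemma~\ref{L Syn} then immediately gives $rc(L)\leq sc(L)=\ell+1$, so it only remains to prove the reverse inequality.

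For the lower bound $rc(L)\geq\ell+1$, let $\mathscr{A}=\langle Q,\Sigma,\delta\rangle$ be any synchronizing automaton with $Syn(\mathscr{A})=L$. Over the unary alphabet the letter $a$ acts as a single map on $Q$, so I would apply Lemma~\ref{stab} to $w=a$ to obtain the stable set $m(a)$. Since the image sets $Q\supseteq Q\dt a\supseteq Q\dt a^{2}\supseteq\cdots$ form a decreasing chain collapsing to $m(a)$, and $\mathscr{A}$ is synchronizing, some power $a^{k}$ sends all of $Q$ to a single state; this forces $|m(a)|=1$. The least $k$ with $Q\dt a^{k}=m(a)$ is precisely $k(a)$, and because $m(a)$ is the singleton and the chain is strictly larger before it stabilizes, $a^{k}$ synchronizes exactly when $k\geq k(a)$. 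Thus the shortest synchronizing word has length $k(a)$, which must equal the minimum length $\ell$ of words in $Syn(\mathscr{A})=L$. Lemma~\ref{card} then yields $\ell=k(a)\leq|Q|-|m(a)|=|Q|-1$, so $|Q|\geq\ell+1$. As $\mathscr{A}$ was arbitrary, $rc(L)\geq\ell+1$, and combined with the upper bound this gives $rc(L)=\ell+1$.

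The main obstacle is the careful identification step in the last paragraph: verifying that for a synchronizing unary automaton the stable set $m(a)$ is a singleton and that $k(a)$ equals the length of the shortest synchronizing word, hence equals $\ell$. Once this bookkeeping about the descending chain of image sets is done, Lemma~\ref{card} closes the gap with no further computation.
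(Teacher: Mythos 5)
Your proof is correct and takes essentially the same approach as the paper: the chain automaton yields $sc(L)=\ell+1$, Lemma~\ref{L Syn} gives $rc(L)\le\ell+1$, and the matching lower bound follows from applying Lemmas~\ref{stab} and~\ref{card} to the letter $a$, concluding $|m(a)|=1$, $k(a)=\ell$, and hence $|Q|\ge\ell+1$. The only difference is cosmetic: you obtain $|m(a)|=1$ from the descending chain of images of $Q$ under powers of $a$, whereas the paper observes directly that no non-singleton subset can be fixed by $a$.
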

\begin{proof}
 Let $\Sigma=\{a\}$ and $a^{\ell}$  be the shortest word in $L$. By the definition $L=a^{\ell}\Sigma^{*}$. The language $L$ is accepted by the DFA $\mathscr{A}$ from Fig.~\ref{fig A}.
\begin{figure}[ht]
\begin{center}
  \begin{picture}(70,8)(0,5)
   \gasset{Nw=6,Nh=6,Nmr=3}
   \thinlines
   \node(A1)(10,10){$0$}
   \node(A2)(25,10){$1$}
   \node(A3)(55,10){$\ell$}
   \node[Nframe=n](A4)(40,10){$...$}
   \node[Nframe=n](Ai)(0,10){$$}
   \node[Nframe=n](At)(65,10){$$}
   \drawloop[loopangle=90](A3){$a$}
   \drawedge(A1,A2){$a$}
   \drawedge(A2,A4){$a$}
   \drawedge(A4,A3){$a$}
   \drawedge(Ai,A1){$$}
   \drawedge(A3,At){$$}
   \end{picture}
\end{center}
\caption{Automaton $\mathscr{A}$}
\label{fig A}
\end{figure}
It is easy to see that the automaton $\mathscr{A}$ is minimal, so $sc(L)=\ell+1$. Now we verify that $rc(L)=\ell+1$. Let $\mathscr{B}=\langle Q,\Sigma,\delta\rangle$ be a DFA minimal in terms of reset complexity. Since the word $a^{\ell}$ is in $L=Syn(\mathscr{B})$, we have $H \dt a \neq H$ for any non-singleton subset $H \subseteq Q$ (otherwise $a^{\ell}$ would not be a synchronizing word). Hence we have $|m(a)|=1$, and $a^{k(a)}$ is a synchronizing word. It implies $k(a)=\ell$ and by lemma~\ref{card} we have $|Q|\ge \ell+1$. On the other hand, by lemma~\ref{L Syn} we have $rc(L)\leq sc(L)=\ell+1$, Hence the equality $|Q|=\ell+1$ holds.
\qed
\end{proof}

Proposition~\ref{sc rc equal} shows that the reset complexity and the state complexity of a given ideal language over a unary alphabet are equal. However, as we will see later in case of a binary alphabet an analogous statement is not true. Nevertheless the upper bound given by lemma~\ref{L Syn} is tight also in case $|\Sigma|=2$.
Consider the language $L_n=\Sigma^{*}a^{n-2}b\Sigma^{*}$. It is recognized by the DFA $\mathscr{A}_{n}$ from Fig.~\ref{fig An}. It is easy to see that $\mathscr{A}_{n}$ is the minimal DFA recognizing $L$. Thus $sc(L)=n$. Now we verify that $rc(L)=n$. Let $\mathscr{B}=\langle Q,\delta,\Sigma\rangle$ be a DFA minimal in terms of reset complexity. The word $a^{n-2}b$ is in $L=Syn(\mathscr{B})$. We have $Q \dt a^{i+1} \neq Q\dt a^{i}$ for any $0\leq i\leq n-3$ (otherwise $a^{i}b$ with $i<n-2$ would be synchronizing for $\mathscr{B}$, but $a^{i}b\not\in L$). Thus $k(a)\ge n-2$. Moreover $|m(a)|\ge 2$. Indeed, if $|m(a)|=1$, then the word $a^{k(a)}$ is synchronizing for $\mathscr{B}$, but $a^{k(a)}\notin L$. Thus by lemma~\ref{card} we have $|Q|\geq k(w)+|m(w)|\ge n-2+2=n$. On the other hand, by lemma~\ref{L Syn} $|Q|\leq sc(L)=n$. Hence the equality $|Q|=n$ holds.
{\par\bigskip}
\begin{figure}[ht]
\begin{center}
  \begin{picture}(105,10)(0,3)
   \gasset{Nw=6,Nh=6,Nmr=3}
   \thinlines
   \node[Nframe=n](Ai)(0,17){$$}
   \node[Nframe=n](At)(105,17){$$}
   \node(A0)(10,10){$0$}
   \node(A1)(25,10){$1$}
   \node(A2)(40,10){$2$}
   \node[Nadjust=wh,Nmr=1](A4)(70,10){$n-2$}
   \node[Nadjust=wh,Nmr=1](A5)(95,10){$n-1$}
   \node[Nframe=n](A3)(55,10){$...$}
   \drawloop[loopdiam=8,loopangle=-45](A5){$a,b$}
   \drawloop[loopdiam=8,loopangle=-90](A4){$a$}
   \drawedge(A0,A1){$a$}
   \drawedge(Ai,A0){$$}
   \drawedge(A5,At){$$}
   \drawloop[loopdiam=8,loopangle=-135](A0){$b$}
   \drawedge(A1,A2){$a$}
   \drawedge(A2,A3){$a$}
   \drawedge(A3,A4){$a$}
   \drawedge(A4,A5){$b$}
   \drawedge[ELside=r,curvedepth=-5](A1,A0){$b$}
   \drawedge[curvedepth=5](A2,A0){$b$}
   \end{picture}
\end{center}
\caption{Automaton $\mathscr{A}_{n}$}
\label{fig An}
\end{figure}
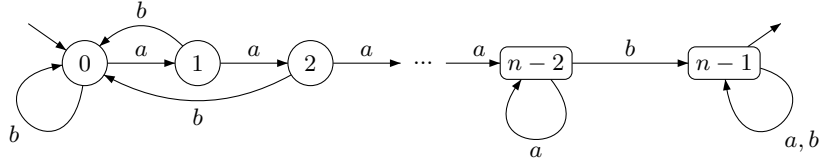

Examples above show that the upper bound $rc(L)\leq sc(L)$ is tight. A simple lower bound on $rc(L)$ can be given in terms of the length $\ell$ of the shortest word in $L$. It is based on the known upper bound on the length of the shortest synchronizing word for a given DFA. We have $rc(L)\geq f(n)=\min\{n\in \mathbb{N} | \frac{n^3-n}{6}\geq l\}$.

\section{\normalsize{Reset and state complexity of slowly synchronizing automata}}

Given a synchronizing automaton $\mathscr{A}=\langle Q,\Sigma,\delta\rangle$, the language $Syn(\mathscr{A})$ can be recognized by the power automaton ${\cal P}=\langle {\cal Q}, \Sigma,\delta,Q,F\rangle$, where ${\cal Q}$ is the set of all nonempty subsets of $Q$, the transition function is a natural extension of $\delta$ (still denoted by $\delta$), the set $Q$ plays the role of the initial state, and $F=\{H \subseteq Q\mid |H|=1\}$. In the examples below we show that for a given synchronizing automaton with $n$ states, its minimized power automaton has $2^{n}-n$ states. Using this result we prove that for a binary alphabet the gap between $rc(L)$ and $sc(L)$ can be exponentially large. Considered automata are examples of ``slowly'' synchronizing automata, i.e. automata whose shortest synchronizing words have length close to $(n-1)^{2}$. The first example belongs to \v{C}ern\'{y} \cite{Cerny}, the others are taken from \cite{AnGuVol}.

Let $\mathscr{A}=\langle Q,\Sigma,\delta\rangle$ be a DFA with $n$ states. Denote $\delta ^{-1}(H,w)=\{q\in Q \mid \delta(q, w)\in H\}$. We define the function $d(p,q):Q\times Q \rightarrow \mathbb{R}$ of \emph{distance} between states $p$ and $q$ as follows (without loss of generality assume that $p<q$): \begin{equation} \label{dist}d(p,q)=min\{q-p,n+p-q\}.\end{equation} The function $d(H)$ for a subset $H \subseteq Q$ is defined in a natural way, namely, \begin{equation} \label{diam}
d(H)=\min_{p,q\in H, p\neq q} d(p,q).
\end{equation}

Consider the \v{C}ern\'{y} automaton with $n$ states $\mathscr{C}_{n}$ (see Fig.~\ref{fig Cn}). Its transition function is defined as follows: \\
$i\dt b=i+1$ for $0\le i\le n-2$, and $(n-1)\dt b=0$;\\
$i\dt a=i$ for $0\le i\le n-2$, and $(n-1)\dt a=0$.
\begin{figure}[ht]
\begin{center}
  \unitlength=2pt
  \begin{picture}(50,75)(0,5)
   \gasset{Nw=10,Nh=10,Nmr=5}
   \thinlines
\node(A1)(0,60){$0$}
\node(A2)(30,75){$1$}
\node[Nframe=n](A3)(60,60){$\ldots$}
\node(n3)(60,25){$n$-$3$}
\node(n2)(30,10){$n$-$2$}
\node(n1)(0,25){$n$-$1$}
   \drawloop[loopdiam=8,loopangle=180](A1){$a$}
   \drawloop[loopdiam=8,loopangle=90](A2){$a$}
   \drawloop[loopdiam=8,loopangle=0](n3){$a$}
   \drawloop[loopdiam=8,loopangle=-90](n2){$a$}
   \drawedge(A1,A2){$b$}
   \drawedge(A2,A3){$b$}
   \drawedge(n2,n1){$b$}
   \drawedge(n1,A1){$a,b$}
   \drawedge(A3,n3){$b$}
   \drawedge(n3,n2){$b$}
   \end{picture}
   \end{center}
   \caption{\v{C}ern\'{y} automaton $\mathscr{C}_{n}$}
   \label{fig Cn}
\end{figure}
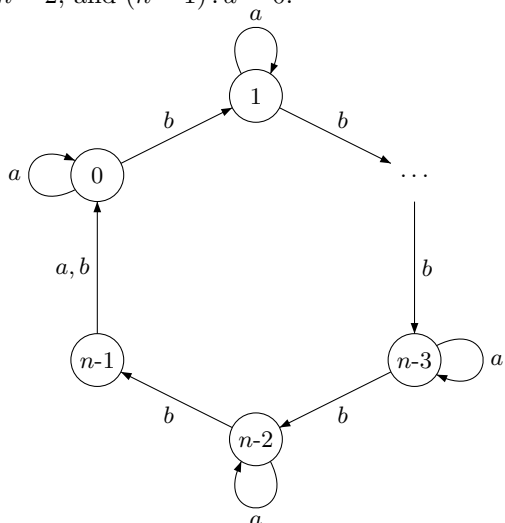

\begin{proposition}
\label{Cn sc}
$sc(Syn(\mathscr{C}_{n}))=2^{n}-n$.
\end{proposition}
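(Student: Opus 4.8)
The plan is to work with the power automaton $\mathcal{P}$ recognizing $Syn(\mathscr{C}_n)$ and then minimize it by hand. Its states are the $2^n-1$ nonempty subsets of $Q=\{0,1,\dots,n-1\}$, the initial state is $Q$, and the accepting states are the $n$ singletons. I would prove $sc(Syn(\mathscr{C}_n))=2^n-n$ by establishing two facts: (i) every nonempty subset is reachable from $Q$, and (ii) the Myhill--Nerode minimization of $\mathcal{P}$ collapses exactly the $n$ singletons into one class and merges nothing else. Granting (i) and (ii), the minimal DFA has $(2^n-1)-(n-1)=2^n-n$ states. Throughout I use that $b$ acts as the full cyclic rotation $i\mapsto i+1\pmod n$, while $a$ fixes $0,\dots,n-2$ and sends $n-1$ to $0$; in particular $a$ identifies \emph{only} the pair $\{0,n-1\}$, and this rigidity is what drives the whole argument.

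For reachability I would argue by downward induction on $|S|$. The base case $S=Q$ holds via the empty word. For the step, given $S$ with $0<|S|<n$ there is a ``block start'' $p\in S$ with $p-1\pmod n\notin S$. Rotating by $c\equiv-p\pmod n$ puts an element of $S$ at state $0$ while keeping $n-1$ empty, so $(S\dt b^{c})\cup\{n-1\}$ has size $|S|+1$ and is reachable by the induction hypothesis; applying $a$ collapses $n-1$ onto the occupied state $0$, yielding $S\dt b^{c}$, and the further rotation $b^{\,n-c}$ returns $S$. Hence all $2^{n}-1$ nonempty subsets are reachable.

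The minimization is where the real work lies. Every singleton recognizes $\Sigma^{*}$ (it is already synchronized), so the $n$ singletons are pairwise equivalent and merge into one accepting sink; conversely each non-singleton $S$ is separated from that sink by the empty word, since $S$ is not accepting and a non-singleton $R$ satisfies $\epsilon\notin\{w\mid |R\dt w|=1\}$. The main obstacle is to show that two distinct non-singleton subsets $S\neq T$ are distinguishable, i.e. some word resets exactly one of them. I would reduce this to producing a word $w$ with $|T\dt w|=1$ and $|S\dt w|\geq 2$: synchronize $T$ along a path so that at the moment of its final merge, when the image of $T$ is exactly the pair $\{0,n-1\}$ on which $a$ acts, the images of two chosen representatives of $S$ remain distinct. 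The freedom of the rotations $b^{c}$ together with the fact that $a$ touches only $\{0,n-1\}$ gives enough control to route those representatives away from the collapse. The delicate points, which I expect to be the technical crux, are the cases where $S$ and $T$ have disjoint supports (so one must keep two elements of $S$ apart throughout the whole reset of $T$) and where two sets share the same cyclic gaps but sit at different positions relative to the special pair $\{0,n-1\}$ (as already for $\{0,1\}$ versus $\{1,2\}$, which are separated by $ba$); handling these by careful bookkeeping of a two- or three-point configuration under $a$ and $b$ is what ultimately shows that no two non-singleton subsets are equivalent, completing the count $2^{n}-n$.
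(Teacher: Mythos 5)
Your overall architecture matches the paper's: build the power automaton, show every nonempty subset is reachable, observe that the $n$ singletons collapse into a single accepting class, and then show that no two non-singleton subsets are equivalent, giving $(2^{n}-1)-(n-1)=2^{n}-n$. Your reachability argument is complete and correct, and in fact cleaner than the paper's: the ``block start'' trick (pick $p\in S$ with $p-1 \pmod n\notin S$, rotate $p$ to $0$, adjoin $n-1$, collapse with $a$, rotate back) avoids the explicit preimage set $H'$ and the word $ab^{p_{\overline{i}}}$ that the paper constructs in its induction.

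The genuine gap is the pairwise inequivalence of non-singleton subsets, which is not a detail but the bulk of the proposition, and your proposal stops exactly where that proof has to begin. Saying that the rotations ``give enough control to route those representatives away from the collapse'' and that the delicate cases are handled ``by careful bookkeeping'' asserts the conclusion rather than proving it. The paper's proof shows how much structure is actually required: one fixes a pair $\{p,q\}$ of \emph{minimal} cyclic distance $d$ over both sets, uses the explicit word $w=b^{\alpha}a(b^{n-1}a)^{d-1}$ that collapses it, and first proves a separate Claim that $w$ synchronizes no other pair of $H$ or $S$ --- this is what makes the cardinalities of the images controllable at all. Even granting that Claim, the argument splits into three cases according to how $\{p,q\}$ sits relative to $H\cap S$, and the hardest sub-case --- $H=S\cup\{r\}$, a containment differing in a single state, which your plan must also face (there you must synchronize the smaller set while keeping the image of $r$ away from the merge point $0$) --- cannot be settled in one shot: the paper must first apply auxiliary words such as $b^{n-1-q}ab^{q+1}$, $b^{n-2}ab^{2}$ or $b^{n-1}a$, chosen by configuration, to increase the relevant distance, and iterate until the minimal pair lies inside the smaller set. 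Note also that the paper's separation is itself iterative (each step reduces one set's size by exactly one while keeping the images distinct), whereas you propose a one-shot ``synchronize $T$ while keeping two points of $S$ apart''; whether that stronger one-shot statement is even attainable is precisely what would need proof, and nothing in your sketch supplies it. (A symptom of the missing detail: your example that $ba$ separates $\{0,1\}$ from $\{1,2\}$ is correct only for $n=3$; for $n\geq 4$ the word $ba$ synchronizes neither set, so it witnesses no inequivalence.)
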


\begin{proof}
From the \v{C}ern\'{y} automaton $\mathscr{C}_{n}=\langle Q,\Sigma,\delta\rangle$ construct its power automaton ${\cal P}=\langle {\cal Q}, \Sigma,\delta,Q,F\rangle$.

First we check that all nonempty subsets $H\subseteq Q$ are reachable. By induction on $k=|H|$. Case $|H|=n$ is clear: the state set $Q$ of  the automaton $\mathscr{C}_{n}$ is the initial state of ${\cal P}$. Assume that any subset with cardinality $1<k\leq n$ is reachable. Now we verify that all subsets $H$ with $|H|=k-1$ are reachable. Let $H=\{p_{1},p_{2},...,p_{k-1}\}$, and $p_{i}<p_{i+1}$ for all $1\leq i\leq k-2$.

If $H=\{0,1,...,k-2\}$, then $H$ is reachable from $H'=\{0,1,...,k-2,n-1\}$: clearly $H'\dt a=H$. Otherwise there exists a positive integer $i$ such that $p_{i} \neq i-1$. Let $\overline{i}$ be the least such an integer, so $H=\{0,1,\ldots,\overline{i}-2,p_{\overline{i}},\ldots,p_{k-1}\}$. Then $H$ is reachable from $$H'=\{0,p_{\overline{i}+1}-p_{\overline{i}},p_{\overline{i}+2}-p_{\overline{i}},
...,p_{k-1}-p_{\overline{i}},n-p_{\overline{i}},n+1-p_{\overline{i}},...,n+\overline{i}-2-p_{\overline{i}},n-1\}.$$ Indeed, it is not hard to see that after applying $ab^{p_{\overline{i}}}$ to $H'$ we obtain $H$.
Since $|H'|=k$, by induction hypothesis $H'$ is reachable. Thus there exists a word $w \in \Sigma ^{*}$ such that $Q \dt w=H'$, and we have $Q \dt wab^{p_{\overline{i}}}=H $.

So, the automaton ${\cal P}$ consists of $2^{n}-1$ reachable states. Obviously, all singletons are equivalent, thus the minimal automaton recognizing $Syn(\mathscr{C}_{n})$ has at most $2^{n}-n$ states.

Next we prove that any two states of ${\cal P}$ different from the terminal one are inequivalent. Take two arbitrary subsets $H$ and $S$ of $Q$ such that $H\ne S$. We verify that there exists a word $w \in \Sigma^{*}$ such that $H \dt w\neq S\dt w$ and at least one of the equalities $|H\dt w|=|H|-1$ and $|S\dt w|=|S|-1$ holds.

First consider a two-state subset $T=\{p,q\}$, $p<q$. Let $d=d(p,q)$, define the parameter $\alpha$ by the formula: $\alpha=n-p-1$ in case $d(p,q)=q-p$, and $\alpha=n-q-1$ in case $d(p,q)=n+p-q$. Note that, the word \begin{equation} \label{syn}w=b^{\alpha}a(b^{n-1}a)^{d-1}\end{equation} synchronizes $T$.
Indeed, $T\dt b^{\alpha}=\{n-1, d-1\}$ (depending on $\alpha$ we have either $p\dt b^{\alpha}=n-1$, and $q\dt b^{\alpha}=d-1$ or viceversa). Hence $T\dt b^{\alpha}a=\{0,d-1\}$. The state $0$ is fixed by the word $u=b^{n-1}a$, and the state $d-1$ under the action of the word $u$ moves to the state $d-2$. So, after applying the word $u$ to the set $\{0,d-1\}$ $d-1$ times, we obtain the set $\{0\}$. So, $T\dt w=\{0\}$.

Given two subsets $H$ and $S$ of $Q$ let us find $d=min\{d(H), d(S)\}$. We denote by $\{p,q\}$ the pair with the distance $d$ (if there are several such pairs, we construct for them corresponding words by the formula~\eqref{syn} and choose the pair with the shortest word $w$). Without loss of generality we may assume $\{p,q\}\subseteq H$. Next we prove the following auxiliary

\begin{claim}
Let $\{p,q\}$ be the pair chosen as above, and $w$ be the word constructed for the pair $\{p,q\}$ by the formula~\eqref{syn}. No other pair in $H$ or $S$ is synchronized by the word $w$.
\end{claim}
\begin{proof}
 Arguing by contradiction, suppose there is a pair $T'=\{p',q'\}$ ($p'<q'$) either in $H$ or in $S$ such that $p'\dt w=q'\dt w$ and either $p'$ is different from $p$ or $q'$ is different from $q$. Let $d'=d(p',q')$. By the definition of $d$ we have $d'\ge d$. Suppose $0\notin T'\dt b^{\alpha}a=\{p'',q''\}$. Then $p''\dt b^{n-1}a=p''-1$, and $q''\dt b^{n-1}a= q''-1$. Thus the distance between the states $p''$ and $q''$ does not change, so, for the word $w$ to synchronize $T'$ it is necessary that $p''=q''$. Since $b$ is a permutation letter, the only possibility for this to happen is $p'\dt b^\alpha a=q'\dt b^\alpha a=0$, a contradiction with the supposition $0\notin \{p'',q''\}$. So, $0\in T'\dt b^\alpha a$. If $n-1\in T'\dt b^\alpha$, then $T'\dt b^\alpha=\{n-1, d'-1\}$. If $d'>d$, then $T'\dt w=\{0,d'-d\}$, a contradiction. Thus $d'=d$, but then $T'\dt b^\alpha=T\dt b^\alpha$. Since $b$ is a permutation letter, we get $T=T'$. Since $p<q$ and $p'<q'$ we get $p=p'$, $q=q'$. A contradiction. So we have $0\in T'\dt b^\alpha$. Hence $T'\dt b^\alpha=\{0,d'\}$. But then $T'\dt w=\{0,d'-d+1\}$, and $d'-d+1>0$ even in case $d'=d$. Again a contradiction.
 \qed
 \end{proof}

Returning to the proof of the proposition consider three cases.

\textbf{Case 1:} $p,q\in H \setminus S$. If $S\subsetneq H$, then $|S|\le |H|-2$. By the claim we have $|H\dt w|=|H|-1$, and $|S\dt w|=|S|$. Thus $|S\dt w|=|S|\le |H|-2=|H\dt w|-1<|H\dt w|$. Therefore $S\dt w\ne H\dt w$. Suppose now $S\setminus H\ne \varnothing$. We show that either $(H\setminus S)$ and $(S\setminus H)$ do not intersect or $(H\setminus S)\dt w \cap (S\setminus H)\dt w =\{0\}$.
Since $b$ is a permutation letter, $(H\setminus S) \dt b^{\alpha}$ and $(S\setminus H) \dt b^{\alpha}$ do not intersect. By the definition of $\alpha$ we have $\{p,q\}\dt b^{\alpha}=\{n-1,d-1\}$, and $\{p,q\}\dt b^\alpha a=\{0,d-1\}$. Since $n-1\in (H\setminus S) \dt b^{\alpha}$ we have $n-1 \not\in (S\setminus H)  \dt b^{\alpha}$. Hence, $(S\setminus H) \dt b^{\alpha}a=(S\setminus H) \dt b^{\alpha}$. Thus the subsets $(H\setminus S)  \dt b^{\alpha}a$ and $(S\setminus H)  \dt b^{\alpha}a$ can have at most one common element, namely $0$.
Note that, for each $ r \in Q$ such that $r\neq 0$ we have $r\dt b^{n-1}a=r-1$, and for $r=0$ we have $r\dt b^{n-1}a=0$. Thus the word $b^{n-1}a$ shifts by $1$ all the states different from $0$. Moreover, through $d-1$ steps no state different from $p$ and $q$ moves to $0$, otherwise we would get another pair synchronized by $w$, which contradicts the claim. This implies that the subsets $(H\setminus S)\dt w$ and $(S\setminus H)\dt w$ can have at most one common element $0$. If $(H\setminus S)\dt w\ne \{0\}$ or $(S\setminus H)\dt w\ne \{0\}$, then obviously $H\dt w\ne S\dt w$. It remains to study the case when $(H\setminus S)\dt w=(S\setminus H)\dt w= \{0\}$. By the claim it is possible only if $H\setminus S=\{p,q\}$ and $S\setminus H=\{r\}$ for some $r\in Q$. Then we have $r\dt w=0$. By the definition of $w$ we get $r\dt b^{\alpha}a(b^{n-1}a)^{k}=0$ for some $0\leq k\leq d-2$. Consider subsets $\overline{H}=H \dt b^{\alpha}a(b^{n-1}a)^{d-2}$ and $\overline{S}=S \dt b^{\alpha}a(b^{n-1}a)^{d-2}$. Note that $\overline{S}\subseteq \overline{H}$ and $\overline{H}\setminus \overline{S}=\{1\}$. It remains to check that there exists a word $v \in \Sigma^{*}$ such that $\overline{H} \dt v\neq \overline{S}\dt v$ and at least one of the equalities $|\overline{H}\dt v|=|\overline{H}|-1$ and $|\overline{S}\dt v|=|\overline{S}|-1$ holds. This situation will be studied later inside the Case 3.

\textbf{Case 2:} $p,q\in I=H \cap S$. If $S\subsetneq H$, then $|S|<|H|$. By the claim $|S\dt w|=|S|-1$ and $|H\dt w|=|H|-1$, so $S\dt w\ne H\dt w$. The case $H\subsetneq S$ is considered symmetrically. So we may assume $S\setminus H\ne\varnothing$ and $H\setminus S\ne\varnothing$.
 We show that $(H\setminus S) \dt w \cap (S\setminus H) \dt w =\varnothing$. Apply $w$ to $H\setminus S$ and $S\setminus H$.
Since $b$ is a permutation letter, it is clear that $(H\setminus S) \dt b^{\alpha}$ and $(S\setminus H) \dt b^{\alpha}$ have empty intersection. Next we apply the letter $a$ to $(H\setminus S) \dt b^{\alpha}$ and $(S\setminus H) \dt b^{\alpha}$. All the states in
these subsets are fixed by $a$. Otherwise some state moves to $0$; besides, by the choice of the pair $\{p,q\}$ either $p\dt b^\alpha a=0$ or $q\dt b^\alpha a=0$. In any case we would obtain another pair synchronized by $w$, which is a contradiction with the claim.
Finally, we apply $d-1$ times the word $b^{n-1}a$. Each time the numbers of all states in both subsets decrease by 1. Moreover, through $d-1$ steps no state moves to 0, otherwise we would again get a contradiction with the claim. Thus, $H\dt w\ne S\dt w$.

\textbf{Case 3}: one of the states of the pair $\{p,q\}$ belongs to $I$, and the other to $H\setminus S$. By the claim we have $|S\dt w|=|S|$ and $|H\dt w|=|H|-1$. Let $S\setminus H\ne\varnothing$. The claim implies $0\notin (S\setminus H)\dt w$. By the same argument as in the previous cases we deduce that the sets $(H\setminus S)\dt w$ and $(S\setminus H)\dt w$ do not intersect. So $H\dt w\ne S\dt w$.
If $S\subsetneq H$ and moreover $|S|<|H|-1$, then we get $|S\dt w|=|S|<|H|-1=|H\dt w|$, so $S\dt w\ne H\dt w$.
Finally, it remains to consider the case $H=S\cup\{r\}$. We may assume $r=0$ (otherwise we apply the word $b^{n-r}$ to $S$ and $H$). Let $S=\{q_1,q_2,...,q_\ell\}$, then $H=\{0,q_1,q_2,...,q_\ell\}$. We have $p=0$  and $q=q_i$ for some $1\le i\le \ell$.

If $d(0,q)=q$, then we apply the word $u=b^{n-1-q}ab^{q+1}$ to $H$ and $S$. Note that $q\dt u=q+1$, $0\dt u=0$ and for all $1 \leq i \leq \ell$ $q_i \dt u=q_i$ (if $q_i\neq q$). Thus the distance between $0$ and $q$ increases, the distances between $0$ and other states in $S$ are the same, and the distances between $q$ and other states in $S$ decrease.

If $q=n-1$ and $q_1=1$, then we apply the word $u=b^{n-2}ab^{2}$ to $H$ and $S$. Note that $0\dt u=0$, $q_1\dt u=2$, $q\dt u=q$ and for all $2\leq i \leq \ell$ $q_i \dt u=q_i$ (if $q_i\neq q$). Thus the state $q_1=1$ moves to the state $2$, all the other states remain unchanged.

If $d(0,q)=n-q$ and $q_1>1$, then we apply the word $u=b^{n-1}a$ to $H$ and $S$. Note that $0\dt u=0$, for all $ 1\leq i \leq \ell$ $q_i \dt u=q_i-1$. Thus the distance between $0$ and $q$ increases, the distances between $q$ and other states in $S$ remain the same.

Next we construct subsets $H \dt u$ and $S \dt u$. Note that $S\dt u \subseteq H \dt u$ and moreover $H\dt u \setminus S \dt u=\{0\}$. Choose the corresponding pair $\{\overline{p},\overline{q}\}$ for the subsets $S\dt u$ and $H\dt u$. If $\overline{p},\overline{q}\in S\dt u$, then we apply the argument from Case 2 and find a word $\overline{w}$ such that $S\dt uw\ne H\dt uw$ and $|H\dt uw|=|H\dt u|-1= |H|-1$.
Otherwise repeat the algorithm above applied to the subsets $H\dt u$ and $S \dt u$. Through the finite number of steps we will obtain the subsets $H\dt \overline{u}$ and $S\dt \overline{u}$ such that the corresponding pair $\{\overline{p},\overline{q}\}$ is contained in $S \dt \overline{u}$. And this case was studied above.

So we have that for two arbitrary subsets $H$ and $S$ there exists a word $w \in \Sigma^{*}$ such that $H\dt w\neq S \dt w$ and at least one of the equalities $|H \dt w|=|H|-1$ and $|S\dt w|=|S|-1$ holds. Next we consider subsets $H \dt w$, $S \dt w$. If none of them is $\{0\}$, apply described algorithm again. It is clear that through the finite number of steps we will find a word $\overline{w}$ with the property $\overline{w} \in Syn(H) \setminus Syn(S)$ (or $\overline{w} \in Syn(S) \setminus Syn(H)$).
\qed
\end{proof}

Using the same technique as in the previous proposition we prove the same result for the automaton $\mathscr{L}_{n}$ (see Fig.~\ref{fig Ln}). Due to space limits we omit the proof here.
\begin{figure}[ht]
\begin{center}
  \unitlength=2pt
\begin{picture}(60,70)(0,5)
\gasset{Nw=10,Nh=10,Nmr=5}
\thinlines
\node(A1)(0,60){$0$}
\node(A2)(30,75){$1$}
\node[Nframe=n](p0)(60,60){$\ldots$}
\node(n3)(60,25){$n$-$3$}
\node(n2)(30,10){$n$-$2$}
\node(n1)(0,25){$n$-$1$}
\drawedge(A1,A2){$a,b$}
\drawedge(A2,p0){$a,b$}
\drawedge(p0,n3){$a,b$}
\drawedge(n3,n2){$b$}
\drawedge(n2,n1){$b$}
\drawedge(n1,A1){$a,b$}
\drawedge(n3,n1){$a$}
\drawedge(n2,A1){$a$}
\end{picture}
\end{center}
\caption{Automaton $\mathscr{L}_{n}$}
\label{fig Ln}
\end{figure}
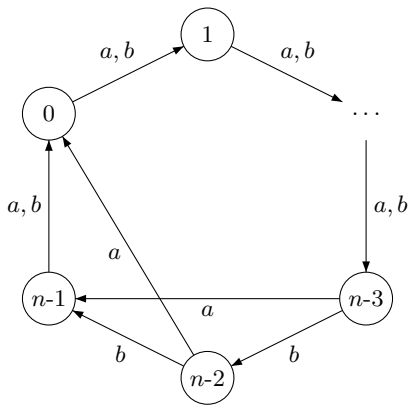

\begin{proposition}
\label{Ln sc}
$sc(Syn(\mathscr{L}_{n}))=2^{n}-n$.
\end{proposition}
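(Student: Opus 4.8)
The plan is to follow the same strategy as in the proof of Proposition~\ref{Cn sc}: build the power automaton ${\cal P}=\langle {\cal Q},\Sigma,\delta,Q,F\rangle$ of $\mathscr{L}_n$ and show that (i) all $2^n-1$ nonempty subsets of $Q$ are reachable from $Q$, (ii) all singletons are equivalent and constitute the unique terminal state, and (iii) any two distinct non-terminal subsets are inequivalent. Items (i)--(iii) yield $2^n-1-(n-1)=2^n-n$ states in the minimized power automaton, which is exactly $sc(Syn(\mathscr{L}_n))$. Throughout I would use that in $\mathscr{L}_n$ the letter $b$ acts as the full cycle $i\mapsto i+1\pmod n$, whereas $a$ acts by $i\mapsto i+1$ for $0\le i\le n-4$, by $n-3\mapsto n-1$, and by $n-2\mapsto 0$, $n-1\mapsto 0$. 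Thus $a$ coincides with $b$ on most states, its only collision is $\{n-2,n-1\}\mapsto 0$, and it makes a single ``skip'' $n-3\mapsto n-1$ across the seam between $n-1$ and $0$.

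First I would prove reachability by downward induction on $k=|H|$, mirroring the induction in Proposition~\ref{Cn sc}, with base case $H=Q$. The key facts are that $b$ permits arbitrary rotation and that $a$ drops cardinality precisely when $\{n-2,n-1\}\subseteq H$ (then $H\dt a$ has one fewer element). For a target $H$ with $|H|=k-1\le n-2$ that is \emph{not} invariant under the translation $i\mapsto i+2$, I would rotate $H$ by a suitable $b^{t}$ into a position with $0\in H$ and $n-2\notin H$; then $H'=\{n-2,n-1\}\cup\{\,a^{-1}(x)\mid x\in H\setminus\{0\}\,\}$ is well defined, has cardinality $k$, is reachable by the induction hypothesis, and satisfies $H'\dt a=H$, so $H$ is reached by $a$ followed by the inverse rotation. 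The only sets for which no single rotation achieves ``$0\in H,\ n-2\notin H$'' are, for even $n$, the two translation-invariant sets (the even and the odd residues, both of size $n/2$); these I would reach from a size-$k$ set by a longer reducing word whose unique cardinality-dropping step occurs only after first moving to a non-invariant configuration, transitivity of reachability then applying. Reachability of the $(n-1)$-element sets $Q\setminus\{j\}$ follows from $Q\dt a=Q\setminus\{n-2\}$ together with rotation. This gives (i), and (ii) is immediate, since from any singleton $\{s\}$ every word stays a singleton, so all singletons accept exactly $Syn(\mathscr{L}_n)$ and coincide in ${\cal P}$.

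Next I would establish (iii) following Cases~1--3 of Proposition~\ref{Cn sc}, whose engine is a word synchronizing one chosen closest pair and no other pair. So the first task is to design, for a pair $\{p,q\}$ at cyclic distance $d=d(p,q)$, an analogue of the word \eqref{syn} that merges $\{p,q\}$ in $\mathscr{L}_n$. Here the mechanism differs from the \v{C}ern\'y case: because $a$ agrees with $b$ away from the seam, the distance of a pair changes only when the pair straddles $\{n-3,n-2,n-1\}$, where the skip $n-3\mapsto n-1$ and the maps $n-2,n-1\mapsto 0$ shrink a gap by one, a gap of size $1$ finally merging through $\{n-2,n-1\}\mapsto 0$. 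Thus the pair-synchronizing word would be built from rotations $b$ and the letter $a$, iterating a gap-closing block that rotates the pair to the seam and applies $a$, roughly $d$ times. With this word $w$ in hand I would prove the exact analogue of the Claim: using that $b$ is a permutation and that a second simultaneous coincidence at the seam is impossible, no pair of $H$ or of $S$ other than the designated one is synchronized by $w$. Granting the Claim, Cases~1--3 carry over almost verbatim, producing a word that separates $H$ from $S$ while decreasing $|H|$ or $|S|$; iterating until one of the sets is a singleton and the other is not yields a word synchronizing one but not the other, i.e.\ distinguishing them in ${\cal P}$.

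The main obstacle will be exactly this redesign of the pair-synchronizing word and the re-proof of the Claim, because in $\mathscr{L}_n$ the letter $a$ is \emph{not} almost-identity: it transports most states like $b$, so the clean ``fix $0$, decrement the rest'' step $b^{n-1}a$ of the \v{C}ern\'y proof is unavailable and all gap-closing must be routed through the seam. Once the correct word and the Claim are secured, the only genuinely new verifications are that no unintended collision occurs during the $d$ passes and the even-$n$ reachability of the two translation-invariant sets; the three-case separation argument itself is then a routine adaptation of Proposition~\ref{Cn sc}.
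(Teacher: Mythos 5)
Your reading of $\mathscr{L}_n$ is correct, and your overall plan --- power automaton, reachability by downward induction on cardinality exploiting the unique collision $\{n-2,n-1\}\dt a=\{0\}$, singletons as the terminal class, then pairwise inequivalence via a pair-synchronizing word plus a ``no other pair merges'' Claim --- is precisely the technique the paper invokes (the paper omits this proof entirely, saying only that it repeats the method of Proposition~\ref{Cn sc}). The reachability half of your sketch is essentially sound, including the even-$n$ wrinkle: each of the two sets invariant under $i\mapsto i+2$ is the image, under a cardinality-preserving word, of a non-invariant set of the same size (e.g.\ the odd residues equal $G\dt a$ for $G=\{0,2,\dots,n-4\}\cup\{n-3\}$), so transitivity of reachability disposes of them exactly as you indicate.

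The genuine gap is in the inequivalence half, where you assert that once the Claim is secured, Cases 1--3 ``carry over almost verbatim.'' They do not, for a concrete reason: the Claim constrains only pairs lying inside $H$ or inside $S$, while the disjointness arguments of Cases 1 and 2 must also exclude collisions of \emph{mixed} pairs (one state in $H\setminus S$, one in $S\setminus H$). Mixed pairs are not controlled by the minimality of $d=\min\{d(H),d(S)\}$ --- they can be at distance $1$ even when $d$ is large --- and if one of them merges under $w$ the conclusion can fail outright, e.g.\ for $H=I\cup\{x\}$, $S=I\cup\{y\}$ with $x\dt w=y\dt w$ one gets $H\dt w=S\dt w$. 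In the \v{C}ern\'y proof this is excluded structurally, not by the Claim: every collision of $b^{\alpha}a(b^{n-1}a)^{d-1}$ can occur only at state $0$, and $0$ is occupied by the image of the designated pair from the second letter onward, so any unintended collision first produces an intra-$H$ or intra-$S$ collision with $p$ or $q$. Your unspecified blocks ``rotate the pair to the seam and apply $a$'' do not reproduce this guard: if the designated pair is parked at $\{n-3,d'-3\}$ before an application of $a$, collisions happen at $\{n-2,n-1\}\mapsto 0$ while the designated pair sits elsewhere, and nothing prevents a mixed pair from merging there. The gap is closable by pinning down the word: take $w=b^{\alpha}(b^{n-1}a)(b^{n-2}a)^{d-1}$, where $b^{\alpha}$ places the designated pair at $\{n-1,d-1\}$. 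In $\mathscr{L}_n$ the block $b^{n-1}a$ fixes $0,\dots,n-3$ and sends $n-1\mapsto 0$, $n-2\mapsto n-1$, while $b^{n-2}a$ fixes $0$ and $n-1$, sends $1\mapsto 0$ and decrements $2,\dots,n-2$; hence after the first block one designated state is parked at the fixed point $0$, every later collision must occur at positions $\{0,1\}$, i.e.\ on top of that designated state, and the \v{C}ern\'y-style case analysis (including delegating the residual situation $H\setminus S=\{p,q\}$, $S\setminus H=\{r\}$ to Case 3) then genuinely carries over. Without committing to such a choice, your Claim plus ``routine adaptation'' does not close the argument.
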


Finally, consider the DFA with $n$ states $\mathscr{V}_{n}$ (see Fig.~\ref{fig Vn}).
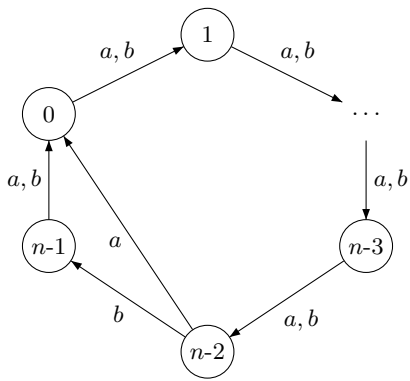
\begin{figure}[ht]
\begin{center}
\unitlength=2pt
\begin{picture}(60,65)
\gasset{Nw=10,Nh=10,Nmr=5}
\thinlines
\node(A1)(0,50){$0$}
\node(A2)(30,65){$1$}
\node[Nframe=n](p0)(60,50){$\ldots$}
\node(n3)(60,25){$n$-$3$}
\node(n2)(30,5){$n$-$2$}
\node(n1)(0,25){$n$-$1$}
\drawedge(A1,A2){$a,b$}
\drawedge(A2,p0){$a,b$}
\drawedge(p0,n3){$a,b$}
\drawedge(n3,n2){$a,b$}
\drawedge(n2,n1){$b$}
\drawedge(n2,A1){$a$}
\drawedge(n1,A1){$a,b$}
\end{picture}
\end{center}
\caption{Automaton $\mathscr{V}_{n}$}
\label{fig Vn}
\end{figure}

\begin{proposition}

$sc(Syn(\mathscr{V}_{n}))=2^{n}-n$.
\end{proposition}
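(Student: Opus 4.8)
The plan is to deduce the whole statement from Proposition~\ref{Cn sc} by a transfer argument, after observing that $\mathscr{V}_{n}$ and $\mathscr{C}_{n}$ have, as far as their power automata are concerned, the same behaviour. As before I would form the power automaton $\mathcal{P}=\langle \mathcal{Q},\Sigma,\delta_{\mathscr{V}},Q,F\rangle$ of $\mathscr{V}_{n}$ and establish three facts: every nonempty subset of $Q$ is reachable from $Q$; the $n$ singletons are pairwise equivalent and inequivalent to every larger subset; and distinct non-singleton subsets are inequivalent. Together these yield exactly $(2^{n}-1)-n+1=2^{n}-n$ classes, hence states, in the minimized automaton. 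Throughout, for a subset $H\subseteq Q$ I write $\delta_{\mathscr{V}}(H,w)$ and $\delta_{\mathscr{C}}(H,w)$ for the image of $H$ under $w$ in $\mathscr{V}_{n}$ and in $\mathscr{C}_{n}$.

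The crux is a single computation linking the two automata. In both $\mathscr{C}_{n}$ and $\mathscr{V}_{n}$ the letter $b$ acts as the cyclic permutation $i\mapsto i+1\pmod n$. Define the monoid homomorphism $\phi:\Sigma^{*}\to\Sigma^{*}$ by $\phi(b)=b$ and $\phi(a)=b^{n-1}a$. A direct check of the action of $b^{n-1}a$ in $\mathscr{V}_{n}$ gives $0\mapsto 0$, $i\mapsto i$ for $1\le i\le n-2$ and $n-1\mapsto 0$, which is exactly the action of the letter $a$ in $\mathscr{C}_{n}$. Hence the two homomorphisms $w\mapsto\delta_{\mathscr{C}}(\cdot,w)$ and $w\mapsto\delta_{\mathscr{V}}(\cdot,\phi(w))$ from $\Sigma^{*}$ to the transformation monoid of $Q$ agree on the generators $a,b$, and therefore on all of $\Sigma^{*}$; that is,
\[\delta_{\mathscr{V}}(H,\phi(w))=\delta_{\mathscr{C}}(H,w)\qquad\text{for every }H\subseteq Q\text{ and every }w\in\Sigma^{*}.\]

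From this identity all three properties follow at once from Proposition~\ref{Cn sc}. Reachability: if $\delta_{\mathscr{C}}(Q,w)=H$ for some $w$, which holds for every nonempty $H$ by Proposition~\ref{Cn sc}, then $\delta_{\mathscr{V}}(Q,\phi(w))=H$, so $H$ is reachable in $\mathcal{P}$ too. Inequivalence: if a word $w$ witnesses that two non-singleton subsets $H,S$ are inequivalent in the power automaton of $\mathscr{C}_{n}$, i.e. exactly one of $\delta_{\mathscr{C}}(H,w),\delta_{\mathscr{C}}(S,w)$ is a singleton, then the images in $\mathscr{V}_{n}$ are the very same subsets, so $\phi(w)$ witnesses the same inequivalence in $\mathcal{P}$. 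Finally the $n$ singletons are equivalent and separated from every larger subset for the usual reasons: every letter sends a singleton to a singleton, while a non-singleton is distinguished from a singleton already by the empty word. Collecting the classes gives $sc(Syn(\mathscr{V}_{n}))=2^{n}-n$.

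The main obstacle is conceptual rather than computational: one must notice that, although the letters $a$ of $\mathscr{C}_{n}$ and of $\mathscr{V}_{n}$ act quite differently, the word $b^{n-1}a$ of $\mathscr{V}_{n}$ restores the action of the letter $a$ of $\mathscr{C}_{n}$, so that the transition monoid exploited in Proposition~\ref{Cn sc} embeds into that of $\mathscr{V}_{n}$ via $\phi$. Once $\phi$ is in hand, neither the delicate distance estimate nor the Claim and its three cases need be repeated, and this is exactly what the phrase ``the same technique'' conceals. The alternative, for a self-contained proof, is to rerun the argument of Proposition~\ref{Cn sc} verbatim with the word $b^{n-2}a$ (which fixes $0$ and decrements every other state, precisely as $b^{n-1}a$ does in $\mathscr{C}_{n}$) in the role of the iterated block of the synchronizing word~\eqref{syn}; re-establishing the Claim in that direct approach, namely that this block never merges a second pair before the chosen one, is where essentially all of the labour would lie.
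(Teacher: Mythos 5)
Your proof is correct, and at the top level it follows the paper's own strategy---transfer the content of Proposition~\ref{Cn sc} from $\mathscr{C}_n$ to $\mathscr{V}_n$---but your transfer mechanism is genuinely different and tidier in execution. The paper proves reachability of all nonempty subsets separately, by induction using inverse images $\delta^{-1}(H,w)$, and handles inequivalence via an auxiliary automaton $\mathscr{V}_n'$ over $\{b,c\}$ whose letter $c$ acts as the word $ab$ of $\mathscr{C}_n$; since this expresses the letters of $\mathscr{V}_n'$ as words of $\mathscr{C}_n$ (not the other way around), a separating word $w$ of $\mathscr{C}_n$ cannot be pushed across directly: it must first be normalized (repeated $a$'s removed using $T\cdot aa = T\cdot a$, and a final $b$ appended so that every occurrence of $a$ is followed by $b$) before each factor $ab$ can be replaced by $c$, and one must also invoke the identification of $\mathscr{V}_n'$ with $\mathscr{V}_n$ up to letter renaming and a cyclic shift of the states. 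Your homomorphism $\phi(a)=b^{n-1}a$, $\phi(b)=b$ goes in the convenient direction: it expresses the letters of $\mathscr{C}_n$ as words of $\mathscr{V}_n$, is total on $\Sigma^{*}$, requires no normalization, no auxiliary automaton and no relabeling, and transfers reachability and inequivalence simultaneously; the only computation needed is the check that $b^{n-1}a$ fixes $0,\dots,n-2$ and sends $n-1$ to $0$ in $\mathscr{V}_n$, which is exactly the action of $a$ in $\mathscr{C}_n$, and that check is right. One small point of precision: full reachability of all subsets (singletons included) and the existence of separating words are established in the \emph{proof} of Proposition~\ref{Cn sc}, not in its statement, so you should cite the proof (the paper is equally loose here); alternatively, observe that the equality $sc(Syn(\mathscr{C}_n))=2^{n}-n$ by itself already forces every non-singleton subset to be reachable, forces these subsets to be pairwise inequivalent, and guarantees that at least one singleton is reachable, which is all your counting argument needs.
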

\label{Vn sc}
\begin{proof}
From the DFA $\mathscr{V}_{n}=\langle Q,\Sigma,\delta\rangle$ construct its power automaton ${\cal P}=\langle {\cal Q}, \Sigma,\delta,F\rangle$.
First we check that all nonempty subsets $H\subseteq Q$ are reachable. By induction on $k=|H|$. Case $|H|=k$ is clear: the state set $Q$ of $\mathscr{V}_{n}$ is the initial state of ${\cal P}$. Assume that any subset with cardinality $1<k\leq n$ is reachable. Now we verify that all subsets $H$ with $|H|=k-1$ are reachable. Suppose that $H=\{p_{1},p_{2},...,p_{k-1}\}$ and $p_{i}<p_{i+1}$ for all $1\leq i\leq k-2$.
If $p_{1}\neq 0$, consider the subset $H'=\delta^{-1}(H,b^{p_1})$. Note that $0\in H'$.
If $n-1 \not\in H'$ then $H'$ is reachable from $T=\delta^{-1}(H',a)$, and $|T|=k$. If $n-1 \in H'$ then we can find an integer $\alpha_{1}$ such that $n-1 \not\in \delta^{-1}(H',b^{\alpha_1})$, then find an integer $\alpha_{2}$ such that $0 \in \delta^{-1}(H',a^{\alpha_{2}}b^{\alpha_{1}})$, and $n-1\notin \delta^{-1}(H',a^{\alpha_{2}}b^{\alpha_{1}})$. It is easy to see that $H'$ is reachable from $T=\delta^{-1}(H',aa^{\alpha_{2}}b^{\alpha_{1}})$. Note that also in this case $|T|=k$, so by induction hypothesis $T$ is reachable. Therefore the subset $H'$ is also reachable, thus $H$ is reachable. So the automaton ${\cal P}$ consists of $2^{n}-1$ states. All singletons are equivalent, thus the minimal automaton recognizing $Syn(\mathscr{V}_{n})$ has at most $2^{n}-1-n+1=2^{n}-n$ states.

Next we prove that any two states of ${\cal P}$ which differ from the terminal one are inequivalent. For the proof we use the result from Proposition 2 and the technique from \cite{AnGuVol}. Let $\delta:Q\times \Sigma^{*}\rightarrow Q$ be the transition function of the \v{C}ern\'{y} automaton $\mathscr{C}_{n}$. We transform the \v{C}ern\'{y} automaton by defining a new transition function as follows. Take an arbitrary state $p$ and put: $\delta_{1}(p,b)=\delta(p,b)$, $\delta_{1}(p,c)=\delta(p,ab)$, where $c$ is a new letter. It is not difficult to see that the DFA $\mathscr{C}_{n}$ is transformed to the DFA $\mathscr{V}_{n}'$ (see Fig. ~\ref{fig Vn1}) over the alphabet $\{b,c\}$.
\begin{figure}[ht]
\begin{center}
\unitlength=2pt
\begin{picture}(60,65)
\gasset{Nw=10,Nh=10,Nmr=5}
\thinlines
\node(A1)(0,50){$0$}
\node(A2)(30,65){$1$}
\node[Nframe=n](p0)(60,50){$\ldots$}
\node(n3)(60,25){$n$-$3$}
\node(n2)(30,5){$n$-$2$}
\node(n1)(0,25){$n$-$1$}
\drawedge(A1,A2){$c,b$}
\drawedge(A2,p0){$c,b$}
\drawedge(p0,n3){$c,b$}
\drawedge(n3,n2){$c,b$}
\drawedge(n1,A1){$b$}
\drawedge(n2,n1){$c,b$}
\drawedge(n1,A2){$c$}
\end{picture}
\end{center}
\caption{Automaton $\mathscr{V}_{n}'$}
\label{fig Vn1}
\end{figure}
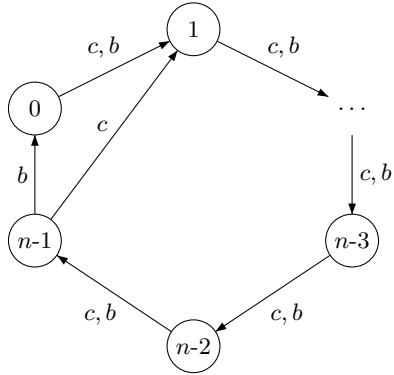
The DFA $\mathscr{V}_{n}'$ is the same as $\mathscr{V}_{n}$ up to renaming letters $c$ and $a$ and the cyclic shift of the state numbers. So we can consider the automaton $\mathscr{V}_{n}'$ instead of $\mathscr{V}_{n}$. Take two arbitrary not equal subsets $H$ and $S$ of the state set of $\mathscr{V}_{n}'$. Since $H$ and $S$ are inequivalent as states in the power automaton ${\cal P}(\mathscr{C}_{n})$, there exists a word $w$ synchronizing only one of them. Since for any subset $T$ in the DFA $\mathscr{C}_{n}$ we have $T\dt aa=T\dt a$, then we can assume that $w$ doesn't contain repeating letters $a$. Thus each occurrence of $a$ in $w$ is either followed by the letter $b$, or is the last letter of $w$. Since $w$ synchronizes only one of the subsets $H$ and $S$, and $b$ is a permutation letter, then $wb$ also synchronizes only one of the subsets $H$ and $S$. So we may assume that all the occurrences of $a$ in $w$ are followed by $b$. We construct the word $\overline{w}$ from $w$ as follows: all inclusions of $ab$ in $w$ replace with $c$. By this construction the word $\overline{w}$ acts on subsets of $\mathscr{V}_{n}'$ in the same way as in $\mathscr{C}_{n}$. So in the automaton $\mathscr{V}_{n}'$ this word synchronizes only one of the subsets $H$ and $S$. Thus the corresponding states of the power automaton of $\mathscr{V}_{n}'$ are not equivalent.
\qed
\end{proof}

\begin{theorem}
$rc(Syn(\mathscr{C}_{n}))=rc(Syn(\mathscr{L}_{n}))=rc(Syn(\mathscr{V}_{n}))=n$.
\end{theorem}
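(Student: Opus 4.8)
The plan is to prove each equality by matching the bounds $rc\le n$ and $rc\ge n$. The upper bound needs no work: each of $\mathscr{C}_n$, $\mathscr{L}_n$, $\mathscr{V}_n$ is a synchronizing automaton on exactly $n$ states whose set of reset words is precisely the language in question, so the definition of reset complexity gives $rc(Syn(\mathscr{C}_n)),\,rc(Syn(\mathscr{L}_n)),\,rc(Syn(\mathscr{V}_n))\le n$. Everything rests on the lower bound, and for this I would reuse the strategy already used for the family $L_n$: take an arbitrary synchronizing automaton $\mathscr{B}=\langle Q,\Sigma,\delta\rangle$ with $Syn(\mathscr{B})=L$, exhibit one word $w$ over $\{a,b\}$ for which membership of its powers in $L$ forces $k(w)+|m(w)|\ge n$ (in the notation of Lemma~\ref{stab}), and then apply Lemma~\ref{card} to obtain $|Q|\ge n$. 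The leverage is that whether $w^{j}$, or $w^{i}v$ for a completion $v$, lies in $L$ is an intrinsic property of $L$, hence identical in $\mathscr{B}$ and in the reference automaton, while inside $\mathscr{B}$ it pins down the sizes $|Q\dt w^{j}|$ and the stabilization index $k(w)$.

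Two regimes occur. When $w$ reduces $|Q\dt w^{j}|$ by exactly one at each step and first collapses to a singleton at step $n-1$ (equivalently $w^{n-1}$ is a reset word while $w^{n-2}$ is not), the relation $w^{n-1}\in L$ forces $|m(w)|=1$ in $\mathscr{B}$ and $w^{n-2}\notin L$ forces $k(w)\ge n-1$, whence $|Q|\ge n$. I would realise this regime with $w=ab^{n-1}$ for $\mathscr{C}_n$ and with $w=ab^{n-2}$ for $\mathscr{V}_n$; in each case the functional graph of $w$ is a single path feeding all $n$ states into one fixed point, so $|Q\dt w^{j}|=n-j$ and the threshold is exactly $n-1$. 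These two verifications are routine.

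The delicate case is $\mathscr{L}_n$, which I expect to be the main obstacle. Here the same word $w=ab^{n-2}$ does \emph{not} synchronize: its functional graph is a path feeding into the fixed point $n-2$ together with a second, isolated fixed point $n-3$, so iterating $w$ drives $Q$ down to the two-element set $\{n-3,n-2\}$ and stalls there. Thus $w^{j}\notin L$ for every $j$, which only yields $|m(w)|\ge 2$; a lower bound on $k(w)$ must be recovered by other means. I would append the completion word $v=ba$, which resets the residual pair (rotating $\{n-3,n-2\}$ to $\{n-2,n-1\}$ and merging by $a$), and verify inside $\mathscr{L}_n$ that $w^{i}v\in L$ if and only if $i\ge n-2$: for $i=n-2$ this holds because $Q\dt w^{n-2}=\{n-3,n-2\}$, while for $i<n-2$ the set $Q\dt w^{i}$ still has at least three states and $v$ can delete at most one. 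Transporting this to $\mathscr{B}$, if the chain $Q\dt w^{0},Q\dt w^{1},\dots$ stabilized before step $n-2$ then $w^{n-2}v$ and some $w^{i}v$ with $i<n-2$ would act identically on $\mathscr{B}$, contradicting $w^{n-2}v\in L$ but $w^{i}v\notin L$; hence $k(w)\ge n-2$, and with $|m(w)|\ge 2$ Lemma~\ref{card} again gives $|Q|\ge n$.

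The crux is therefore the $\mathscr{L}_n$ analysis, where one must (i) recognise that iterating a single word stalls on a small, reset-able set instead of synchronizing, (ii) choose a completion word realising the precise threshold $n-2$, and (iii) establish the strict decrease of $|Q\dt w^{j}|$ up to that threshold, which is what converts the language-level threshold into $k(w)\ge n-2$. By contrast the $\mathscr{C}_n$ and $\mathscr{V}_n$ cases reduce to checking that $ab^{n-1}$, respectively $ab^{n-2}$, peels off one state per application, so that the synchronization threshold is $n-1$ and not smaller.
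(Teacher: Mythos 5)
Your proof is correct, but it takes a genuinely different route from the paper's. The paper obtains the lower bound $rc \ge n$ by contradiction via state complexity: if some DFA with at most $n-1$ states had $L$ as its language of synchronizing words, its power automaton (restricted to reachable subsets, with all singletons identified) would recognize $L$ with at most $2^{n-1}-(n-1)$ states, contradicting $sc(L)=2^{n}-n$ as established in Proposition~\ref{Cn sc}, Proposition~\ref{Ln sc}, and the analogous proposition for $\mathscr{V}_{n}$. You instead transplant the technique the paper uses in Section~2 for $L_n=\Sigma^{*}a^{n-2}b\Sigma^{*}$: pick a witness word ($ab^{n-1}$ for $\mathscr{C}_n$, $ab^{n-2}$ for $\mathscr{V}_n$ and $\mathscr{L}_n$, with completion $v=ba$ in the last case), observe that membership of $w^{j}$ or $w^{i}v$ in $L$ is intrinsic to $L$, and convert these membership facts into the bound $k(w)+|m(w)|\ge n$ inside an arbitrary DFA $\mathscr{B}$ with $Syn(\mathscr{B})=L$, finishing with Lemma~\ref{card}. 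Your verifications check out: in $\mathscr{C}_n$ and $\mathscr{V}_n$ the chosen word is a single path feeding into one fixed point, so it synchronizes precisely at step $n-1$, which in $\mathscr{B}$ forces $|m(w)|=1$ and $k(w)\ge n-1$; in $\mathscr{L}_n$ the iteration of $w$ stalls at $\{n-3,n-2\}$ (so no power of $w$ lies in $L$, forcing $|m(w)|\ge 2$ in $\mathscr{B}$), and since $a$ merges only the pair $\{n-2,n-1\}$ while $b$ is a permutation, one indeed gets $w^{i}v\in L$ exactly when $i\ge n-2$, which forces $k(w)\ge n-2$ via the stabilization argument. Comparing what each approach buys: yours is elementary and self-contained --- notably it does not rest on Proposition~\ref{Ln sc}, whose proof the paper omits for space reasons, so for $\mathscr{L}_n$ your argument is actually the more complete one --- at the price of case-by-case constructions of witness words; the paper's argument costs nothing once the state-complexity computations are in hand (and those are needed anyway for the section's main message, the exponential gap between $rc$ and $sc$), and it implicitly establishes the general inequality $sc(L)\le 2^{rc(L)}-rc(L)$ valid for every ideal language $L$.
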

\begin{proof}
Arguing by contradiction suppose that $rc(Syn(\mathscr{C}_{n}))<n$. From the minimal in terms of reset complexity automaton for $Syn(\mathscr{C}_{n})$ construct its power automaton consisting only of reachable subsets. This power automaton have at most inequivalent $2^{n-1}-(n-1)$ states and recognizes $Syn(\mathscr{C}_{n})$. However, by Proposition 2 we have $sc(Syn(\mathscr{C}_{n}))=2^{n}-n$, a contradiction. Thus $rc(Syn(\mathscr{C}_{n}))\geq n$. But $Syn(\mathscr{C}_{n})$ is the language of reset words of the \v{C}ern\'{y} automaton $\mathscr{C}_{n}$ which has exactly $n$ states. Hence, the equality $rc(Syn(\mathscr{C}_{n}))=n$ holds. The other equalities are obtained analogously.
\qed
\end{proof}

Thus, we see that the description of an ideal language $L$ by means of an automaton for which $L$ serves as the language of synchronizing words can be exponentially more succinct than the ``standard'' description via minimal
automaton recognizing $L$.

\section{\normalsize{On uniqueness of the minimal in terms of reset complexity automaton}}

The minimal DFA recognizing a given language is constructed in a unique way up to isomorphism. However, minimal in terms of reset complexity automaton may be constructed in various ways. We give the corresponding example to demonstrate this fact.

Recall that a DFA is called \textit{strongly connected} if for any two states $p,q$ there exists a word $w\in \Sigma^{*}$ such that $\delta (p,w)=q$. The state $s$ of a DFA $\mathscr{A}=\langle Q,\Sigma,\delta\rangle$ is called the \textit{sink} if $s \dt a=s$ for all $a\in\Sigma$.

Here we exhibit a strongly connected 6-state
synchronizing automaton $\mathscr{S}_{6}$ (see Fig. ~\ref{fig S6}) and a 6-state synchronizing automaton
$\mathscr{Z}_{6}$ (see Fig. ~\ref{fig Z6}) having a \emph{sink} state (a state fixed by all letters)
such that $\mathscr{Z}_{6}$ and $\mathscr{S}_{6}$ have the same language of
synchronizing words, namely $L=(a+b)^*(b^{3}ab^{2}a+a^{2}b^{3}a+abab^{3}a+
ab^{2}ab^{3}a)(a+b)^*$. By an exhaustive computer search we have shown that $L$
is not the language of synchronizing words for any synchronizing automaton with
less than 6~states whence both $\mathscr{Z}_{6}$ and $\mathscr{S}_{6}$ are
minimal in terms of reset complexity.

\begin{figure}[ht]
\begin{center}
\begin{picture}(100,30)
   \gasset{Nw=6,Nh=6,Nmr=3}
\thinlines
\node(A)(10,20){$0$}
\node(B)(30,20){$1$}
\node(C)(50,20){$2$}
\node(D)(70,20){$3$}
\node[Nadjust=wh,Nmr=1](E)(70,2){$4$}
\node[Nadjust=wh,Nmr=1](F)(95,20){$5$}
\drawedge(B,C){$b$}
\drawedge(C,D){$b$}
\drawedge(D,E){$b$}
\drawedge[ELside=r](E,F){$b$}
\drawloop[loopangle=120](A){$a,b$}
\drawloop[loopangle=120](B){$a$}
\drawloop[loopangle=0](F){$b$}
\drawedge[ELside=r,curvedepth=-6](C,B){$a$}
\drawedge[curvedepth=7](D,B){$a$}
\drawedge[curvedepth=7](E,A){$a$}
\drawedge[ELside=r,curvedepth=-10](F,C){$a$}
\end{picture}
\end{center}
\caption{Automaton ${\cal Z}_{6}$}
\label{fig Z6}
\end{figure}

\begin{figure}[ht]
\begin{center}
\begin{picture}(100,30)
\gasset{Nw=6,Nh=6,Nmr=3}
\thinlines
\node(A)(10,20){$0$}
\node(B)(30,20){$1$}
\node(C)(50,20){$2$}
\node(D)(70,20){$3$}
\node[Nadjust=wh,Nmr=1](E)(70,2){$4$}
\node[Nadjust=wh,Nmr=1](F)(95,20){$5$}
\drawedge(A,B){$b$}
\drawedge(B,C){$b$}
\drawedge(C,D){$b$}
\drawedge(D,E){$b$}
\drawedge[ELside=r,curvedepth=-7](F,C){$a$}
\drawedge(E,F){$b$}
\drawloop[loopangle=120](A){$a$}
\drawloop[loopangle=-120](B){$a$}
\drawedge[ELside=r,curvedepth=-7](C,A){$a$}
\drawedge[curvedepth=7](D,B){$a$}
\drawedge[curvedepth=7](E,B){$a$}
\drawedge[curvedepth=7](E,B){$a$}
\drawedge[curvedepth=5](F,E){$b$}
\end{picture}
\end{center}
\caption{Automaton ${\cal S}_{6}$}
\label{fig S6}
\end{figure}

\textbf{Future work}
The question that remains open is whether or not the uniqueness takes places within the class of automata with
sink and within the class of strongly connected automata. Also it would be interesting to design algorithms (and study the computational complexity) for the following problems:

\verb"Find_minimal_reset":\\
\textbf{Input}: A DFA $\mathscr A$ recognizing a language $L$ such that $L=\Sigma^*L\Sigma^*$;\\
\textbf{Output}: A synchronizing DFA $\mathscr{B}$ such that $Syn(\mathscr{B})=L$ and $\mathscr{B}$ has $rc(L)$ states.

\verb"Check_minimal_reset":\\
\textbf{Input}: A synchronizing DFA $\mathscr{A}$;\\
\textbf{Question}: Is $\mathscr{A}$ minimal in terms of reset complexity for the language $Syn(\mathscr{A})$?

The algorithms for these problems might give a hint on how to improve the lower bound on the reset complexity in terms of the shortest word in the language, and in this way approach the \v{C}ern\'{y} conjecture.

\textbf{Acknowledgement} The author thanks lecturer Elena V. Pribavkina  for proposing the problem and for precious suggestions.


\begin{thebibliography}{99}

\bibitem{AnGuVol}
D.S. Ananichev, V.V. Gusev, M.V. Volkov \emph{Slowly Synchronizing Automata
 and Digraphs}, LNCS 6281(010), MFCS 2010, 55-65.
\bibitem{Cerny}
\v{C}ern\'{y} J. \emph{Pozn\'{a}mka k homog\'{e}nnym eksperimentom s kone\v{c}n\'{y}mi automatami},Mat.-Fiz. \v{C}as. Slovensk. Akad. 1964. V.14. P.208-216.[in Slovak]
\bibitem{Pin}
Pin J.-E. \emph{On two combinatorial problems arising from automata theory}, Ann. Discrete Math. 1983. V.17. P. 535-548.
\bibitem{Prib}
Pribavkina E. V., Rodaro E. \emph{Finitely generated synchronizing automata}, In A. H. Dediu, A. M. Ionescu, C. Mart\'{i}n-Vide (eds.) Int. Conf. LATA 2009, Lect. Notes Comp. Sci., Springer-Verlag, Berlin-Heidelberg-New York. 2009. V.5457. P.672-683
\bibitem{Vo_Survey}
M.\,V.~Volkov. \emph{Synchronizing automata and the \v{C}ern\'y
conjecture}. In C.\,Mart\'\i{}n-Vide, F.\,Otto, H.\,Fernau (eds.), Languages
and Automata: Theory and Applications. LATA 2008. Lect.\ Notes Comp.\ Sci.
\textbf {5196} Berlin, Springer, 2008, 11--27.



\end{thebibliography}
\end{document}